\documentclass[journal]{IEEEtran}
\usepackage{amsmath,amsfonts}
\usepackage{algorithm}
\usepackage{array}
\usepackage[caption=false,font=normalsize,labelfont=sf,textfont=sf]{subfig}
\usepackage{textcomp}
\usepackage{stfloats}
\usepackage{url}
\usepackage{verbatim}
\usepackage{graphicx}

\usepackage{kotex}
\usepackage[T1]{fontenc}
\usepackage[utf8]{inputenc} 
\usepackage{xcolor}
\usepackage{algpseudocode}
\usepackage{booktabs}
\usepackage{multirow}

\newtheorem{theorem}{Theorem}

\newcommand{\bsH}{\boldsymbol{H}}
\newcommand{\ValImax}{50}

\begin{document}

\title{Hierarchical Subcode Ensemble Decoding of Polar Codes}

\author{Yubeen Jo, Geon Choi, Chanho Park,~\IEEEmembership{Student Member,~IEEE},  and Namyoon Lee,~\IEEEmembership{Senior Member,~IEEE}
        
\thanks{Yubeen Jo is with the School of Electrical Engineering, Korea University, Seoul 02841, Republic of Korea (e-mail: jybin00@korea.ac.kr).}
\thanks{Geon Choi, Chanho Park and Namyoon Lee are with the Department of Electrical Engineering, POSTECH, Pohang 37673, Gyeongbuk, Republic of Korea (e-mail: $\{$geon.choi, chanho26, nylee$\}$@postech.ac.kr).}
}

\maketitle
\begin{abstract}
 Subcode-ensemble decoders improve iterative decoding by running multiple decoders in parallel over carefully chosen subcodes, increasing the likelihood that at least one decoder avoids the dominant trapping structures. Achieving strong diversity gains, however, requires constructing many subcodes that satisfy a linear covering property—yet existing approaches lack a systematic way to scale the ensemble size while preserving this property. This paper introduces hierarchical subcode ensemble decoding (HSCED), a new ensemble decoding framework that expands the number of constituent decoders while still guaranteeing linear covering. The key idea is to recursively generate subcode parity constraints in a hierarchical structure so that coverage is maintained at every level, enabling large ensembles with controlled complexity. To demonstrate its effectiveness, we apply HSCED to belief propagation (BP) decoding of polar codes, where dense parity-check matrices induce severe stopping-set effects that limit conventional BP. Simulations confirm that HSCED delivers significant block-error-rate improvements over standard BP and conventional subcode-ensemble decoding under the same decoding-latency constraint.

 \end{abstract}

\begin{IEEEkeywords}
Belief propagation, Ensemble decoding, polar codes, URLLC.
\end{IEEEkeywords}

\section{Introduction}
To meet stringent requirements for ultra-reliable low-latency communication (URLLC) \cite{ 6g_channel_coding_trend, shtblk_len_ch_code, sparse_pretran_polar, deep_polar_code, boss_code, wsd}, ensemble decoding (ED) has emerged as an effective paradigm: multiple decoders are run in parallel to create decoding diversity, and the final decision is selected from the best candidate \cite{MBBP, AED_RM_Polar}. A central question is how to generate \emph{meaningful} diversity across the ensemble. Existing ED methods can be broadly grouped into two classes \cite{comparative_study_ED}: (i) approaches that keep the parity-check matrix (PCM) fixed and perturb the input or the decoding schedule—e.g., noise-aided ED (NED) \cite{NED_polar}, automorphism ensemble decoding (AED) \cite{AED_LDPC}, and scheduled ED (SED) \cite{comparative_study_ED}; and (ii) approaches that directly modify the PCM via row operations to produce multiple Tanner graphs, as in multiple-bases belief propagation (MBBP) \cite{MBBP}. For polar codes, AED-style methods have been explored by leveraging automorphisms inherited from Reed–Muller codes \cite{AED_RM_Polar}. However, it has been observed that, for polar codes, permutation-based diversity can largely vanish when the decoder is effectively invariant to such permutations \cite{AED_RM_Polar, AED_Polar}; in this case, altering the input does not translate into genuine graph diversity, and the ensemble gain becomes limited.

This motivates ED methods that create diversity by altering the PCM itself. While MBBP is conceptually appealing, constructing “good” alternative bases—often interpreted as seeking low-weight or well-conditioned representations—quickly leads to combinatorial complexity and even NP-hard subproblems, making it ill-suited for latency-critical operation \cite{SCED_LDPC}. A practical alternative is subcode ensemble decoding (SCED) \cite{SCED_LDPC}, which generates multiple decoders by augmenting a base PCM with additional linearly independent parity constraints. The effectiveness of SCED hinges on constructing subcodes that satisfy a linear covering (LC) property, so that the ensemble collectively provides broad coverage of the codeword space (e.g., via relations such as $\boldsymbol{h}_3=\boldsymbol{h}_1+\boldsymbol{h}_2$). The key limitation, however, is scalability: there is no systematic construction that increases the number of subcodes while \emph{guaranteeing} the LC property, and naïve extensions typically devolve into random search or incur rapidly growing complexity.

In this letter, we propose hierarchical subcode ensemble decoding (HSCED), a new ensemble construction that scales the number of constituent decoders while still guaranteeing linear covering. The key idea is to recursively construct subcode parity constraints in a hierarchical structure so that the LC property is preserved at every level, enabling large ensembles with controlled complexity. To enable a structured and implementation-friendly construction, we first transform the PCM into a sparse, upper-triangular basis via a row-reduced echelon form (RREF) preprocessing step, which preserves the original code space while exposing a convenient algebraic structure for recursive augmentation. We then build the ensemble by incrementally adding linearly independent parity constraints following the proposed hierarchy, yielding an explicit complexity–performance trade-off through the depth (and size) of the hierarchy.

We demonstrate the effectiveness of HSCED on belief propagation (BP) decoding of polar codes, where decoding is notoriously difficult due to the dense PCM structure and the resulting stopping-set behavior. Here, HSCED delivers a favorable asymmetric trade-off: although adding constraints can introduce some additional short cycles, the hierarchical construction systematically breaks dominant stopping sets, leading to a net performance gain. Simulation results confirm that HSCED achieves significant block-error-rate improvements over standard BP and conventional SCED under the same decoding-latency constraint, substantially narrowing the gap to stronger (but less parallelizable) decoding baselines.

\section{Preliminaries}

\subsection{Channel Coding and Polar Codes}

Consider a binary linear block code $\mathcal{C}(N, K)$ of blocklength $N$ and dimension $K$.
Let $\boldsymbol{u} \in \{0,1\}^K$ be an information vector consisting of independent and uniformly distributed bits.
The encoder maps $\boldsymbol{u}$ to a codeword $\boldsymbol{x} \in \{0,1\}^N$.
The code $\mathcal{C}$ is defined as the null space of a parity-check matrix (PCM) $\bsH \in \{0,1\}^{(N-K) \times N}$ that satisfies $\bsH \boldsymbol{x}^\top = \mathbf{0}$ for all $\boldsymbol{x} \in \mathcal{C}$.
Polar codes are constructed recursively using the kernel matrix $\boldsymbol{F} = \bigl[ \begin{smallmatrix} 1 & 0 \\ 1 & 1 \end{smallmatrix} \bigr]$ \cite{arikan-polar}.
For a code length $N=2^n$, the generator matrix is given by $\boldsymbol{G}_N = \boldsymbol{F}^{\otimes n}$.
The encoder input vector $\boldsymbol{v}$ is constructed by assigning $\boldsymbol{u}$ to the $K$ most reliable bit-channels denoted by $\mathcal{A}$ and setting the remaining frozen bits to zero.
The codeword is generated as $\boldsymbol{x} = \boldsymbol{v}\boldsymbol{G}_N$.
The standard polar PCM $\bsH_{\text{polar}}$ corresponds to the submatrix of $\boldsymbol{G}_N$ associated with the frozen indices.

\subsection{BP Decoding and Structural Weaknesses}

BP decoding is widely adopted for its low complexity and parallelism.
However, its performance on dense graphs is strictly limited by structural vulnerabilities such as short cycles and SS \cite{ldpc, Urbanke}.
Fig. \ref{fig:ss_four_cycle} illustrates these weaknesses in the Tanner graph.
A 4-cycle shown in blue is a closed path of length 4 that induces high correlation between messages and hinders convergence.
Furthermore, an SS is defined as a subset of variable nodes where every neighboring check node is connected at least twice within the subset.
As depicted in the red region of Fig. \ref{fig:ss_four_cycle}, check nodes connected to an SS cannot provide extrinsic information if all variable nodes in the set are erased or unreliable.
Consequently, small SSs are the primary cause of the error floor in BP decoding.

\begin{figure}[htbp!]
    \centering
    \includegraphics[width=0.95\linewidth]{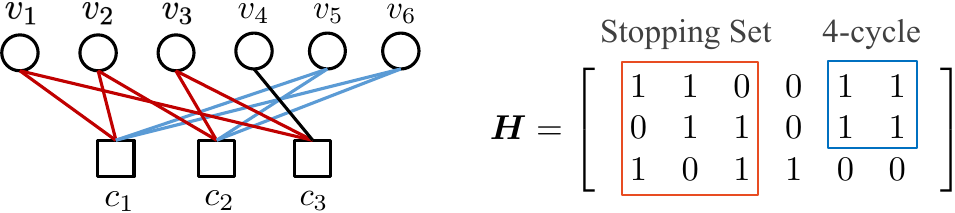}
    \caption{Illustration of structural weaknesses in PCM $\boldsymbol{H}$ where a 4-cycle (blue) represents the smallest stopping set and a larger stopping set is shown in red.}
    \label{fig:ss_four_cycle}
    \vspace{-2mm}
\end{figure}

\subsection{Subcode Ensemble Decoding}
To bridge the performance gap between BP and maximum likelihood (ML) decoding while maintaining low latency, SCED was proposed as a framework that generates graph diversity \cite{SCED_LDPC}.
Unlike AED which requires knowledge of the code's algebraic structure or MBBP which involves an NP-hard search for optimal bases, SCED offers a practical approach to construct diverse parity-check matrices without such constraints \cite{SCED_LDPC}.
The core principle is to construct an ensemble of subcodes that collectively cover the original code space $\mathcal{C}$.
A subcode $\mathcal{C}' \subset \mathcal{C}$ is defined by augmenting the original PCM $\bsH$ with an additional linearly independent row vector $\boldsymbol{h}$.
For notational convenience, we denote the vertical concatenation of $\bsH$ and $\boldsymbol{h}$ as $[\bsH; \boldsymbol{h}]$, yielding the new PCM:
\begin{align}
    \bsH_1 = [\bsH; \boldsymbol{h}_1] = \begin{bmatrix}
        \bsH \\ \boldsymbol{h}_1
    \end{bmatrix}.
\end{align}
This augmentation modifies the graph topology to provide a different decoding perspective.
To ensure that the ensemble can decode any valid codeword $\boldsymbol{x} \in \mathcal{C}$, the set of subcodes must satisfy the linear covering (LC) property $\bigcup \mathcal{C}_i = \mathcal{C}$.
It has been established that appending row vectors $\boldsymbol{h}_1$, $\boldsymbol{h}_2$, and $\boldsymbol{h}_3 = \boldsymbol{h}_1 + \boldsymbol{h}_2$ to the base PCM generates three subcodes whose union satisfies the LC property \cite{SCED_LDPC}.
In this work, we leverage this SCED framework to construct a hierarchical ensemble tailored for polar codes.

\section{Hierarchical Subcode Ensemble Decoding } \label{sec:HSCED}
The proposed HSCED framework provides a scalable decoding architecture by organizing subcodes into a tree-structured ensemble. Unlike conventional ensemble methods \cite{SCED_LDPC} that draw subcodes from a flat, unstructured pool, HSCED recursively expands the ensemble with depth, yielding progressively finer coverage of the code space while preserving the desired covering property at each level. The overall framework has two components: (i) hierarchical subcode construction and (ii) parallel ensemble decoding.

\subsection{Recursive Expansion via Linear Covering}
The core innovation of HSCED lies in its recursive construction. We define the ensemble at depth $d=0$ as the single base graph. From this root, we recursively expand the ensemble by applying the  LC property at each node. As illustrated in Fig. \ref{fig:subcode_tree}, a parent PCM at depth $d-1$ generates three child PCMs at depth $d$ by appending specific row vectors. Let $\mathcal{H}_d$ be the ensemble of PCMs at depth-$d$.

For example, to transition from depth 0 to depth 1, we generate three row vectors $\boldsymbol{h}_1, \boldsymbol{h}_2, \boldsymbol{h}_3$ satisfying $\boldsymbol{h}_3 = \boldsymbol{h}_1 + \boldsymbol{h}_2$.
Adding these to the base graph yields three subcodes defined by $\bsH_k = [\bsH; \boldsymbol{h}_k]$ for $k \in \{1, 2, 3\}$. 
Crucially, this process is repeated for each child node.
A subcode at depth 1 is further expanded into three subcodes at depth 2 using a new set of row vectors. 
Consequently, a subcode at depth $m$ accumulates $m$ additional constraints along its path from the root.
This hierarchical accumulation allows the decoder to target increasingly specific regions of the code space.

\begin{figure}[htbp!]
 \centering
 \includegraphics[width=\columnwidth]{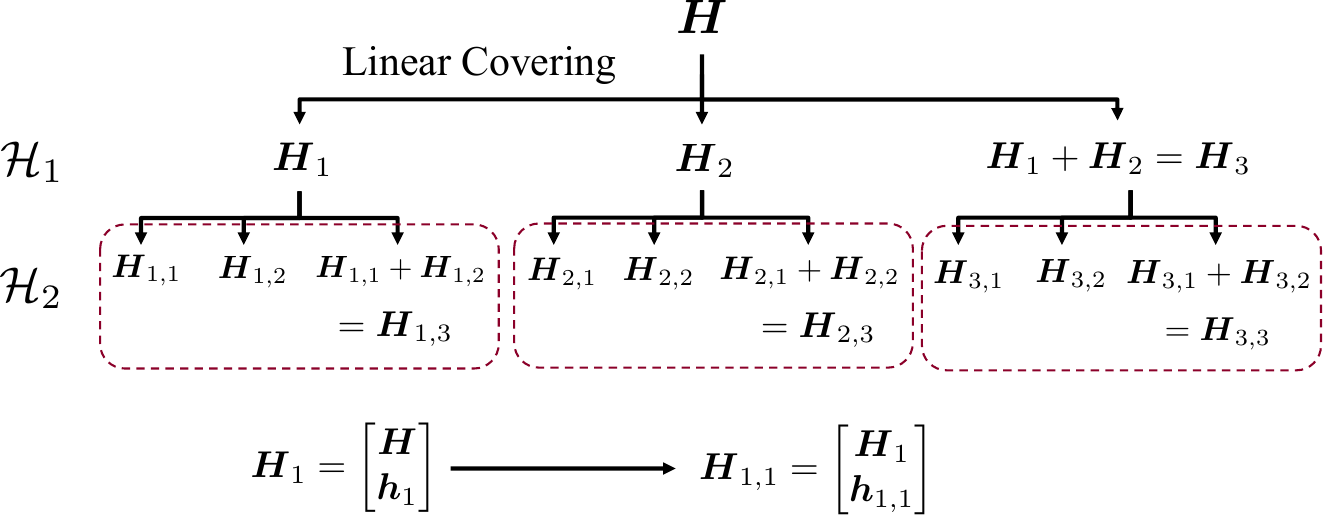}
 \caption{Hierarchical structure of the proposed HSCED ensemble. $\bsH_{i_1,\dots,i_m}$ denotes the PCM of a subcode generated by augmenting the base PCM with row vectors.}
 \label{fig:subcode_tree}
\end{figure}

To maintain the sparsity of the graph during this expansion, we employ a density-aware sampling strategy (Algorithm \ref{alg:hsced_construction}).
Let $p$ be the average density of the base graph.
For each expansion step, we construct row vectors using three disjoint components $\boldsymbol{h}_a, \boldsymbol{h}_b, \boldsymbol{h}_c$ each having a density of $p/2$.
By forming the added rows as $\boldsymbol{h}_1=\boldsymbol{h}_a+\boldsymbol{h}_c$, $\boldsymbol{h}_2=\boldsymbol{h}_b+\boldsymbol{h}_c$, and $\boldsymbol{h}_3 = \boldsymbol{h}_a + \boldsymbol{h}_b$, we ensure that the density of the graph remains controlled at approximately $p$ throughout the hierarchy.

\begin{theorem}[Hierarchical Linear Covering] \label{thm:subcode_size}
Let $\mathcal{C} = \text{Null}(\bsH)$. The union of subcodes defined by the ensemble ${\mathcal{H}}_m$ of $3^m$ PCMs (indexed by $\mathbf{i} \in \{1,2,3\}^m$) forms a complete linear covering of $\mathcal{C}$:
\begin{align}
\bigcup_{\mathbf{i} \in \{1,2,3\}^m} \text{Null}(\bsH_{\mathbf{i}}) = \mathcal{C}.
\end{align}
\end{theorem}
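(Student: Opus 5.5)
I will argue by induction on the depth $m$, with the single-level fact from \cite{SCED_LDPC} as the engine. The induction hypothesis is that $\bigcup_{\mathbf{i}\in\{1,2,3\}^{m-1}} \text{Null}(\bsH_{\mathbf{i}}) = \mathcal{C}$. The base case $m=0$ is trivial, since $\mathcal{H}_0=\{\bsH\}$. The case $m=1$ is exactly the LC statement for $\boldsymbol{h}_1,\boldsymbol{h}_2,\boldsymbol{h}_3=\boldsymbol{h}_1+\boldsymbol{h}_2$.

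Even though it is the result quoted from \cite{SCED_LDPC}, I would reprove the one-step lemma in a form usable at every node, since the argument is two lines. Let $\bsH'$ be any parent PCM and $\boldsymbol{g}_1,\boldsymbol{g}_2,\boldsymbol{g}_3$ be rows with $\boldsymbol{g}_1+\boldsymbol{g}_2+\boldsymbol{g}_3=\mathbf{0}$. Then
\begin{align}
\bigcup_{k=1}^{3}\text{Null}([\bsH';\boldsymbol{g}_k])=\text{Null}(\bsH').
\end{align}
The inclusion ``$\subseteq$'' is immediate, because appending a row only shrinks the null space. For ``$\supseteq$'', take $\boldsymbol{x}\in\text{Null}(\bsH')$ and set $s_k=\boldsymbol{g}_k\boldsymbol{x}^\top\in\mathbb{F}_2$. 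Linearity gives $s_1+s_2+s_3=0$, so the $s_k$ cannot all equal $1$, and some $s_k=0$. For that $k$, $\boldsymbol{x}$ lies in $\text{Null}([\bsH';\boldsymbol{g}_k])$.

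The lemma applies at every node of the tree. In the density-aware construction we have $\boldsymbol{h}_1+\boldsymbol{h}_2+\boldsymbol{h}_3=2(\boldsymbol{h}_a+\boldsymbol{h}_b+\boldsymbol{h}_c)=\mathbf{0}$ over $\mathbb{F}_2$. So any triple of the form $(\boldsymbol{h}_a+\boldsymbol{h}_c,\ \boldsymbol{h}_b+\boldsymbol{h}_c,\ \boldsymbol{h}_a+\boldsymbol{h}_b)$ qualifies, whatever fresh components are drawn at each node.

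For the inductive step, write $\bsH_{(\mathbf{j},k)}=[\bsH_{\mathbf{j}};\boldsymbol{h}^{(\mathbf{j})}_k]$ for $\mathbf{j}\in\{1,2,3\}^{m-1}$, where the triple $\boldsymbol{h}^{(\mathbf{j})}_k$ may depend on the parent $\mathbf{j}$. Then
\begin{align}
\bigcup_{\mathbf{i}\in\{1,2,3\}^m}\text{Null}(\bsH_{\mathbf{i}})=\bigcup_{\mathbf{j}}\bigcup_{k=1}^{3}\text{Null}(\bsH_{(\mathbf{j},k)})=\bigcup_{\mathbf{j}}\text{Null}(\bsH_{\mathbf{j}})=\mathcal{C}.
\end{align}
The middle equality is the lemma applied with $\bsH'=\bsH_{\mathbf{j}}$, and the last equality is the induction hypothesis.

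The only delicate point is bookkeeping. The row triples may differ from node to node, so the lemma must be stated for an arbitrary parent PCM rather than only for $\bsH$; the proof above does not depend on this. I would also remark that linear independence of the added rows is not needed for covering; it is needed only so that each subcode is a proper subcode, with dimension dropping by exactly one per level.
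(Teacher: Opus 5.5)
Your proof is correct and follows the same route as the paper: induction on depth, applying the one-step covering fact (some $\boldsymbol{h}_k$ with $\boldsymbol{h}_1+\boldsymbol{h}_2+\boldsymbol{h}_3=\mathbf{0}$ is satisfied by every codeword) at each parent $\bsH_{\mathbf{p}}$ and substituting into the induction hypothesis. Your explicit check that the density-aware triple sums to $\mathbf{0}$ over $\mathbb{F}_2$, together with your allowance for node-dependent triples, just makes precise what the paper leaves implicit.
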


\begin{IEEEproof}
We proceed by induction on depth $m$.

\textbf{Base Case ($m=1$):} Let $\boldsymbol{h}_3 = \boldsymbol{h}_1 + \boldsymbol{h}_2$. As shown in \cite{SCED_LDPC}, the subcodes $\mathcal{C}_k = \text{Null}([\bsH; \boldsymbol{h}_k])$ for $k \in \{1,2,3\}$ cover $\mathcal{C}$ since any $\boldsymbol{x} \in \mathcal{C}$ failing parity checks for $\boldsymbol{h}_1$ and $\boldsymbol{h}_2$ must satisfy $\boldsymbol{h}_3$. Thus, $\bigcup_{k=1}^3 \text{Null}(\bsH_k) = \mathcal{C}$.

\textbf{Inductive Step:} Assume the hypothesis holds for depth $m-1$ denoted by index path $\mathbf{p} \in \{1,2,3\}^{m-1}$. For each parent code $\mathcal{C}_{\mathbf{p}} = \text{Null}(\bsH_{\mathbf{p}})$, applying the base case construction yields $\bigcup_{j=1}^3 \text{Null}(\bsH_{\mathbf{p}, j}) = \mathcal{C}_{\mathbf{p}}$.
Substituting this into the induction hypothesis, we obtain
\begin{align}
\bigcup_{\mathbf{p}} \left( \bigcup_{j=1}^3 \text{Null}(\bsH_{\mathbf{p}, j}) \right) = \bigcup_{\mathbf{p}} \text{Null}(\bsH_{\mathbf{p}}) = \mathcal{C}.
\end{align}
This completes the proof.
\end{IEEEproof}

\subsection{Parallel BP Ensemble Decoding}
The decoding process leverages the constructed hierarchical ensemble to achieve low latency.
As shown in Fig. \ref{fig:decoder_architecture}, the proposed HSCED decoder consists of a total of $3^m + 1$ parallel BP decoders.
These include one decoder for the base graph and $3^m$ decoders corresponding to the subcodes at depth $m$ (leaf nodes).
The received vector $\boldsymbol y$ serves as the common input for all decoders. All valid codewords found by the ensemble are collected into a candidate list $\mathcal{L}$.
Finally, the decoder selects the optimal codeword from $\mathcal{L}$ that minimizes the Euclidean distance to the received signal $\boldsymbol{y}$, satisfying the ML criterion ({\it{ML-in-the-list}}) \cite{AED_RM_Polar}.

\begin{figure}[htbp]
    \centering
    \includegraphics[width=0.9\columnwidth]{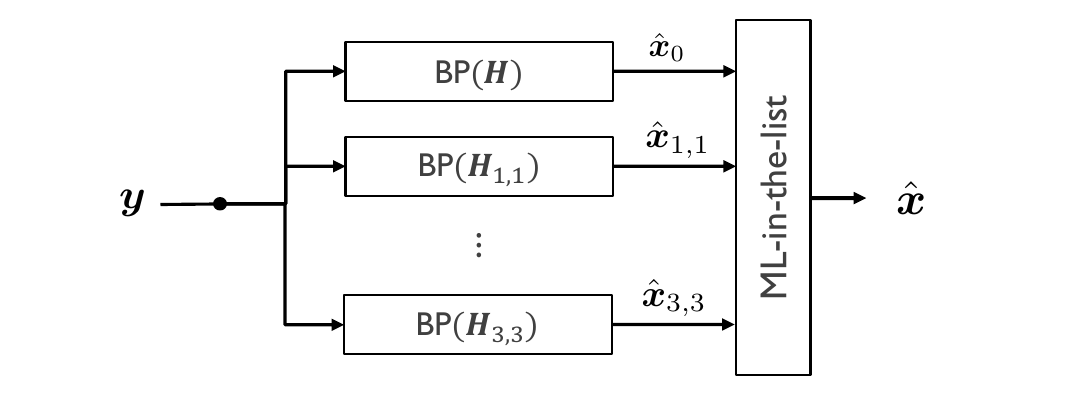}
    \caption{Architecture of the proposed HSCED decoder. At depth $m$, a total of $3^m+1$ parallel BP decoders operate on distinct subcodes.}
    \label{fig:decoder_architecture}
\end{figure}

\begin{algorithm}[tbhp!]
\caption{HSCED Ensemble Construction}
\label{alg:hsced_construction}
\begin{algorithmic}[1]
\small
\State \textbf{Input:} $\bsH$ (Base PCM), $m$ (Target depth)
\State \textbf{Output:} ${\mathcal{H}}_{m}$ ($3^m+1$ PCMs)
\State $p \leftarrow \text{density of } \bsH$ \Comment{Calculate base density}
\State ${\mathcal{H}}_{0} \leftarrow \{\bsH\}$
\For{$d = 1$ \textbf{to} $m$}
    \State $\tilde{\mathcal{H}}_{d} \leftarrow \emptyset$  \Comment{Initialize next level ensemble}
    \ForAll{$\bsH_{\text{parent}} \in {\mathcal{H}}_{d-1}$}
        \State $\boldsymbol{h}_a, \boldsymbol{h}_b, \boldsymbol{h}_c \leftarrow \text{SampleRowVectors}(N, p/2)$ 
        \State $\boldsymbol{h}_1 \leftarrow \text{mod}(\boldsymbol{h}_a + \boldsymbol{h}_c, 2)$ 
        \State $\boldsymbol{h}_2 \leftarrow \text{mod}(\boldsymbol{h}_b + \boldsymbol{h}_c, 2)$ 
        \State $\boldsymbol{h}_3 \leftarrow \text{mod}(\boldsymbol{h}_a + \boldsymbol{h}_b, 2)$
        \For{$i =1$ \textbf{to} $3$}
            \State $\bsH_i \leftarrow [\bsH_{\text{parent}}; \boldsymbol{h}_i]$
        \EndFor
        \State ${\mathcal{H}}_{d} \leftarrow {\mathcal{H}}_{d} \cup \{\bsH_1, \bsH_2, \bsH_3\}$
    \EndFor
\EndFor
\State \textbf{return} ${\mathcal{H}}_{m}$ \Comment{Return the leaf-node ensemble}
\end{algorithmic}
\end{algorithm}

\section{Application to Polar Codes} \label{sec:analysis}

To validate the structural advantages of the proposed framework, we analyze the evolution of graph properties when applying HSCED to standard 5G NR polar codes \cite{3gpp2020}.
We specifically focus on blocklength and dimensions $(N, K) \in \{(64, 32), (128, 96), (512, 464)\}$.
In this section, we examine how the sparsity, cycle distribution, and stopping sets of the graph change during the base graph generation and the subsequent hierarchical expansion.

Since the standard polar PCM $\bsH_{\text{polar}}$ inherently exhibits a dense structure which is ill-suited for BP decoding, we first establish a BP-friendly base graph.
Applying Gaussian elimination to $\bsH_{\text{polar}}$ yields the sparse upper-triangular matrix $\tilde{\bsH} = \mathsf{RREF}(\bsH_{\text{polar}})$ as shown in Fig. \ref{fig:RREF_H}.
This operation transforms the graph into a structure amenable to BP decoding while preserving the original code space.
We utilize this $\tilde{\bsH}$ as the deterministic base graph for the hierarchical ensemble construction described in Section \ref{sec:HSCED}.

\begin{figure}[htbp]
 \centering
 \includegraphics[width=\columnwidth]{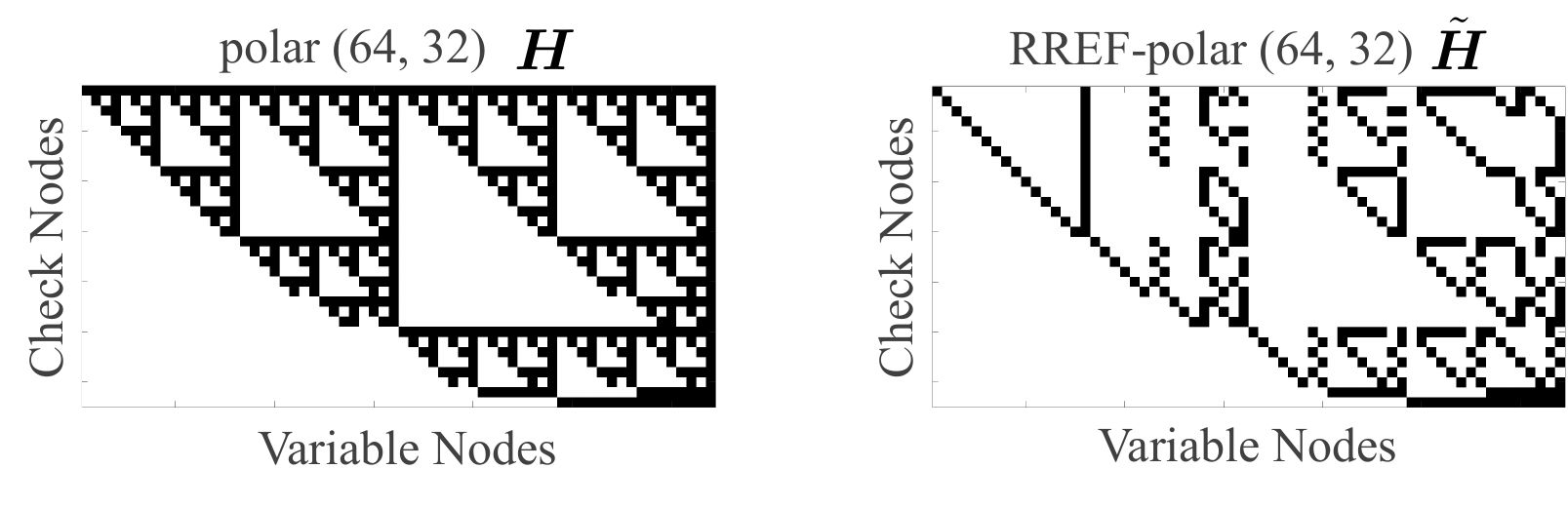}
 \caption{Heatmap comparison of the polar (64, 32) PCM before ($\bsH$) and after RREF application ($\tilde{\bsH}$).}
 \label{fig:RREF_H}
\end{figure}

Table \ref{tab:rref_effect} summarizes the structural improvements achieved by this transformation.
This confirms that RREF provides a favorable baseline for initiating the hierarchical ensemble construction.

\begin{table}[htbp]
\centering
\caption{Comparison of graph properties for polar codes before and after RREF.}
\label{tab:rref_effect}
\begin{tabular*}{\columnwidth}{l@{\extracolsep{\fill}}crrrr}
\toprule
$(N, K)$ & PCM & density(\%) & 4-cycle & $s=3$ SS & $s=4$ SS \\
\midrule
\multirow{2}{*}{$(64, 32)$} & $\bsH$ & $28.13$ & 16,690 & $0$ &$223$ \\
 & $\tilde{\bsH}$ & $\textbf{15.72}$ & $\textbf{2,036}$ & $0$ & $\textbf{27}$ \\
\midrule
\multirow{2}{*}{$(128, 96)$} & $\bsH$ & $21.58$ & 83,674 & $80$ & 7,458\\
 & $\tilde{\bsH}$ & $\textbf{12.01}$ & $\textbf{16,524}$ & $\textbf{37}$ & $\textbf{924}$ \\
\midrule
\multirow{2}{*}{$(512, 464)$} & $\bsH$ & $28.39$ & 2,330,700 & 4,008 & $N/A$ \\
 & $\tilde{\bsH}$ & $\textbf{19.14}$ & $\textbf{483,824}$ & $\textbf{1,438}$ & $N/A$ \\
\bottomrule
\end{tabular*}
\parbox{\columnwidth}{\footnotesize $N/A$: Analysis omitted due to computational complexity.}
\vspace{-4mm}
\end{table}

\subsection{Structural Effects of HSCED} \label{sec:HSCED_Mechanism}

We now analyze the impact of the hierarchical row augmentation on the graph structure.
Although $\tilde{\bsH}$ is sparser than $\bsH$, residual SSs persist due to the graph's structural characteristics.
The efficacy of HSCED stems from the specific topological properties of the RREF-transformed graph which exhibits an \textit{asymmetric gain} between SS reduction and cycle growth.
The hierarchically added row $\boldsymbol{h}$ functions as a global constraint that establishes connections between variable nodes across distant indices.
Empirically, we observed that setting $p$ to the average density of $\tilde{\bsH}$ yields the best performance.

Quantitative analysis in Fig. \ref{fig:4cycle_analysis} and Table \ref{tab:hsced_tradeoff} confirms this mechanism.
As shown in Table \ref{tab:hsced_tradeoff}, at depth $m=4$ for the $(64, 32)$ code, the dominant SS ($s=4$) are suppressed from 27 to 4.1 on average, representing an $85\%$ reduction.
Similarly, for the $(128, 96)$ code, the $s=4$ SSs are reduced by approximately $88\%$ from 924 to 110.4.
In contrast, the increase in 4-cycles is marginal, specifically only $8.6\%$ for $N=64$.
This result demonstrates that the proposed augmentation strategy efficiently eliminates structural weaknesses with negligible degradation in graph sparsity.
This gain is the key factor enabling HSCED to outperform conventional methods on polar codes.

\begin{figure}[htbp]
 \centering
 \includegraphics[width=1\columnwidth]{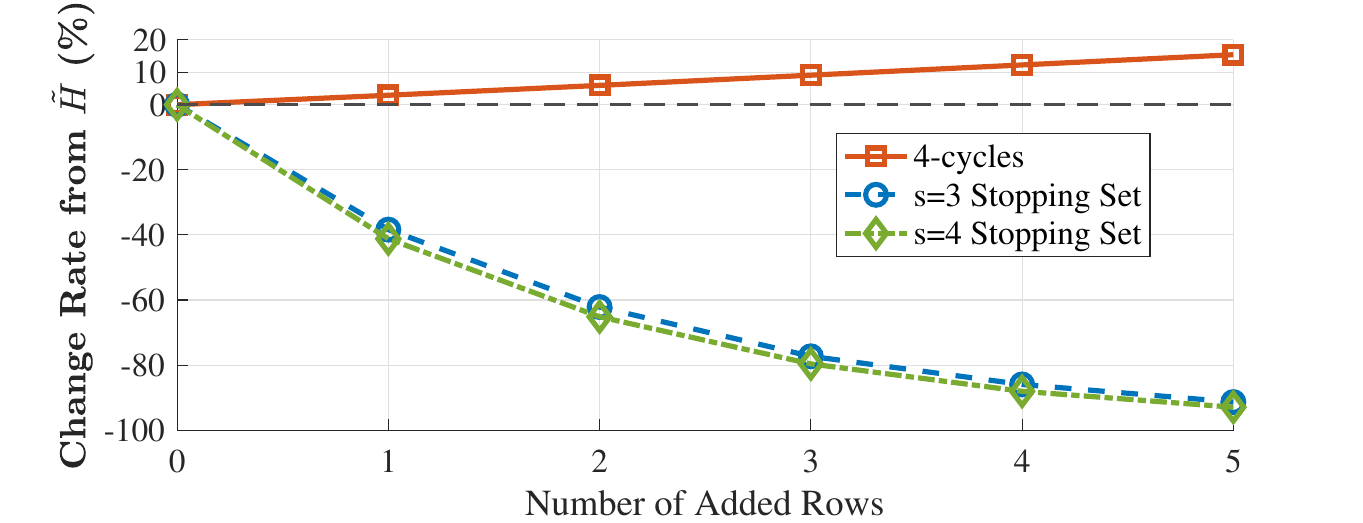}
 \caption{Asymmetry of gain (stopping set reduction) versus cost (4-cycle increase) when adding rows to the RREF-Polar $(64, 32)$ graph $T(\tilde{\bsH})$.}
 \label{fig:4cycle_analysis}
 \vspace{-2mm}
\end{figure}

\begin{table}[htb]
\centering
\caption{Percentage change in graph properties from $\mathsf{RREF}(\bsH_\text{polar})$ with HSCED application (avg. over $2,000$ trials).}
\label{tab:hsced_tradeoff}
\begin{tabular*}{\columnwidth}{@{\extracolsep{\fill}}llrrrr}
\toprule
$(N, K)$ & PCM & 4-cycle & $s=3$ & $s=4$ &$s=5$ \\
\midrule
\multirow{2}{*}{$(64, 32)$} & $\tilde\bsH$ & 2,036 & $0$ &$27$ &$530$ \\
 & $\tilde{\bsH}_{i_1,i_2,i_3,i_4}$  & 2,212 & $0$ & $\textbf{4.1}$ & $\textbf {69.8}$\\
\midrule
\multirow{2}{*}{$(128, 96)$} & $\tilde\bsH$  & 16,524 & $37$ & $924$ & $N/A$\\
 & $\tilde{\bsH}_{i_1,i_2,i_3,i_4}$ & 18,541 & $\textbf{5.2}$ & $\textbf{110.4}$ & $N/A$ \\
\midrule
\multirow{2}{*}{$(512, 464)$} & $\tilde\bsH$ & 483,824 & 1,438 & $N/A$  & $N/A$\\
 & $\tilde{\bsH}_{i_1,i_2,i_3,i_4}$ & 523,081 & $\textbf{218.4}$ & $N/A$ & $N/A$\\
\bottomrule
\end{tabular*}
\parbox{\columnwidth}{$N/A$: Analysis omitted due to computational complexity.}
\vspace{-6mm}
\end{table}

\section{Simulation Results}\label{sec:simulation}

In this section, we evaluate the block error rate (BLER) performance of the proposed HSCED scheme on the 5G NR polar codes \cite{3gpp2020} discussed in Section \ref{sec:analysis}.
We compare the decoding performance of HSCED against conventional BP decoding, the existing SCED method, and the SCL decoder to demonstrate the effectiveness of the hierarchical ensemble construction.
Furthermore, we analyze the trade-off between decoding complexity and latency to highlight the suitability of HSCED for URLLC applications.

\subsection{Simulation Settings and Benchmarks}
We evaluate the BLER performance on 5G NR polar codes with blocklength and dimensions $(N, K) \in \{(64, 32), (128, 96), (512, 464)\}$ over a binary-input additive white Gaussian noise (AWGN) channel. All PCMs are preprocessed using RREF. For BP decoding, we employ the normalized min-sum algorithm (MSA) with a normalization factor $\alpha=0.75$ to mitigate the overestimation of LLR magnitudes.

For the SCED benchmark, we adopted a rigorous selection strategy to ensure a fair comparison.
We first collected an error set consisting of $1,000$ erroneous codewords generated by the MSA decoder at a target BLER of $10^{-3}$.
Subsequently, we generated $5,000$ random candidate row vectors and evaluated them on this set of collected error.
The single best vector $\boldsymbol{h}$ that corrected the maximum number of errors was selected for the SCED ensemble.
In contrast, our proposed HSCED ensembles, specifically depth 3 (HSCED-27 with a total of 28 decoders) and depth 4 (HSCED-81 with a total of 82 decoders), were generated using the structured sampling strategy outlined in Algorithm \ref{alg:hsced_construction} without any such posterior selection or optimization. This setup highlights the robustness and efficiency of the proposed hierarchical construction.

For all BP decoders, the maximum iteration count is set to $I_{\max} = \ValImax$. When adding rows to $\tilde{\bsH}$, the target density $p$ corresponds to the average density of $\tilde{\bsH}$. As a serial decoding benchmark, the SCL decoder is implemented based on \cite{scl} with a list size of $L=32$.

\begin{figure*}[t!]
  \centering
  \includegraphics[width=\textwidth]{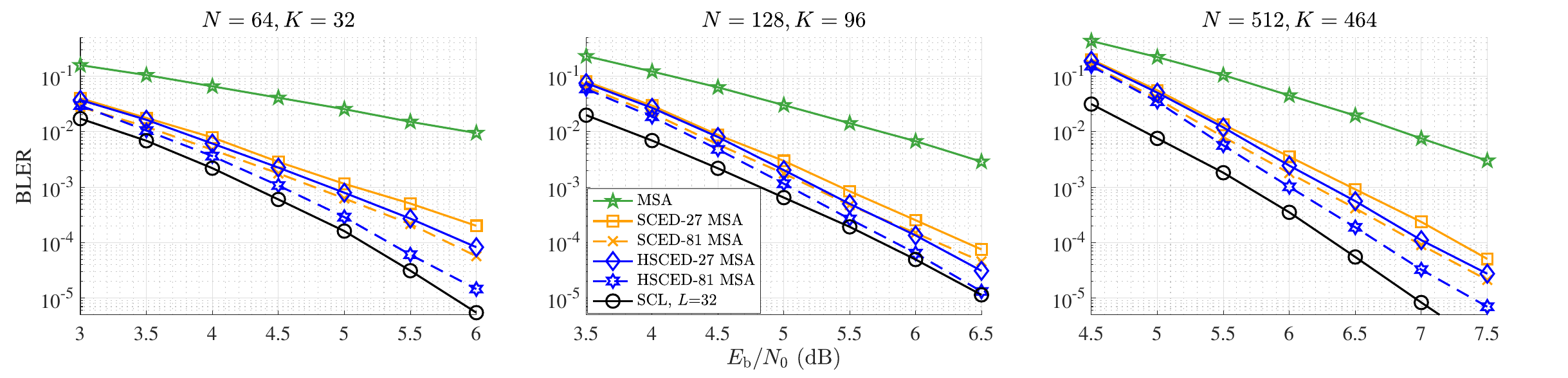}
  \caption{BLER performance comparison of MSA, SCED, HSCED, and SCL decoding for 5G NR polar codes with $(N, K) \in \{(64, 32), (128, 96), (512, 464)\}$. All BP-based decoders use $I_{\max}=\ValImax$ with early stopping.}
  \vspace{-2mm}
  \label{fig:polar_performance_combined}
\end{figure*}

\subsection{Performance, Complexity, and Latency Analysis}

Fig. \ref{fig:polar_performance_combined}
confirms that the proposed HSCED consistently outperforms both MSA and SCED, closely approaching the reliability of SCL32 at $N=128$. This empirical result validates our structural analysis that the systematic removal of dominant stopping sets effectively compensates for the modest increase in short cycles.

Table \ref{tab:complexity_comparison_op} and \ref{tab:latency_comparison_op} highlight the critical trade-off between computational cost and decoding latecny in the high $E_b/N_0$ region targeting URLLC service. HSCED requires a higher total operation count than SCL32 (e.g., $8\times$ operations at $N=512$) due to its parallel ensemble structure. However, this investment yields a decisive advantage in latency. 
While SCL suffers from a latency that scales linearly with blocklength ($2N-2$ cycles \cite{scl_2N_2}), HSCED leverages massive paralleism to ensure a strictly bounded worst-case latency of $2 I_{\max} = 100$ cycles regardless of $N$ (assuming $T_{\text {iter}}=2$ \cite{bp_latency}). Consequently, HSCED achieves a $10\times$ speedup over SCL32 at $N=512$. This demonstrates that HSCED serves as a specialized solution for URLLC scenarios, where trading hardware resources for ultra-low, bounded latency is a justifiable design choice.

\begin{figure}[tbh!]
  \centering
  \includegraphics[width=\columnwidth]{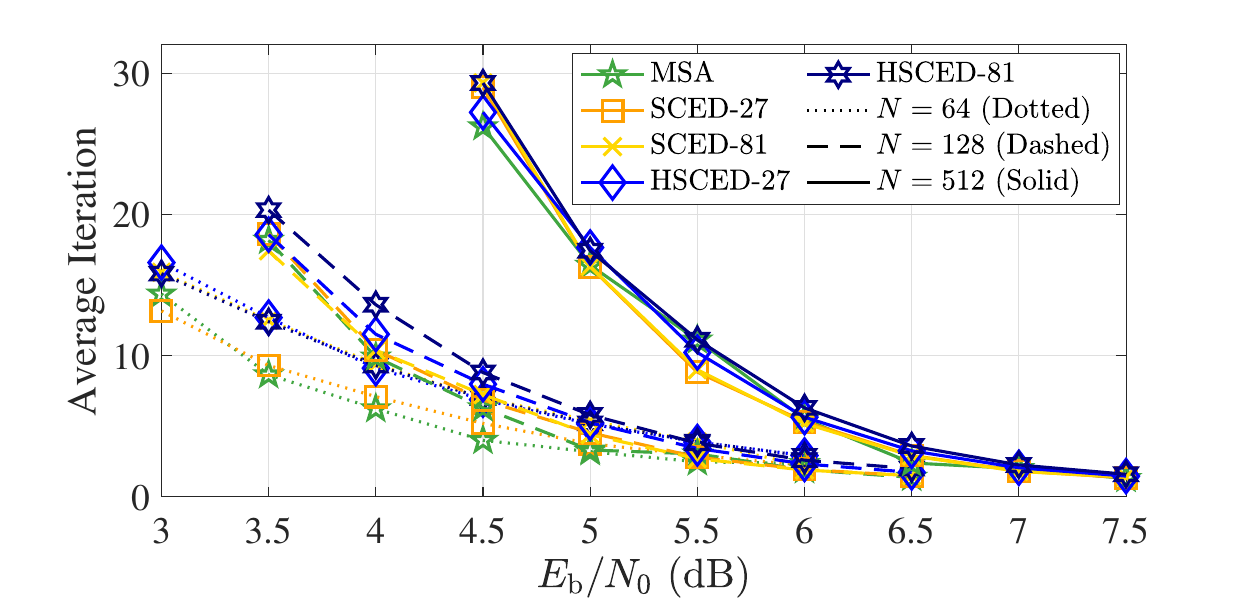}
  \caption{Iterations comparison of MSA, SCED, HSCED, and SCL decoding for 5G NR polar codes with $(N, K) \in \{(64, 32), (128, 96), (512, 464)\}$. All BP-based decoders use $I_{\max}=\ValImax$ with early stopping.}
  \label{fig:HSCED_performance}
  \vspace{-1mm}
\end{figure}

\begin{table}[t!]
\centering
\caption{Comparison of Average Computational Complexity (Total Operations) at High $E_b/N_0$}
\label{tab:complexity_comparison_op}
\resizebox{\columnwidth}{!}{%
\begin{tabular}{@{}lccccc@{}}
\toprule
\multirow{2}{*}{\textbf{Decoder}} & \multirow{2}{*}{\textbf{Complexity Formula}} & \multicolumn{3}{c}{\textbf{Total Operations ($\times 10^4$)}} \\ \cmidrule(l){3-5} 
 &  & \textbf{$N=64$} & \textbf{$N=128$} & \textbf{$N=512$} \\ \midrule
\textbf{SCL ($L=32$)} & $L \cdot N \cdot \log_2 N$ & \textbf{1.23} & \textbf{2.87} & \textbf{14.75} \\ \midrule
MSA (BP) & $2 \cdot |E| \cdot I_{\text{avg}}$ & 0.15 & 0.14 & 1.22 \\
SCED-27 & $M \cdot 2 \cdot |E| \cdot I_{\text{avg}}$ & 3.88 & 3.96 & 34.32 \\
SCED-81 & $M \cdot 2 \cdot |E| \cdot I_{\text{avg}}$ & 11.48 & 11.96 & 106.68 \\ \midrule
\textbf{HSCED-27 (Proposed)} & $M \cdot 2 \cdot |E| \cdot I_{\text{avg}}$ & 5.09 & 4.60 & 37.34 \\
\textbf{HSCED-81 (Proposed)} & $M \cdot 2 \cdot |E| \cdot I_{\text{avg}}$ & 15.13 & 15.94 & 121.93 \\ \bottomrule
\end{tabular}%
}
\vspace{1mm}
\footnotesize{\\Complexity values are calculated at the highest simulated $E_b/N_0$ for each block length ($6.0$ dB for $N=64$, $6.5$ dB for $N=128$, and $7.5$ dB for $N=512$).}
\vspace{-2mm}
\end{table}

\begin{table}[tbh!]
\centering
\caption{Comparison of Latency in Clock Cycles at High $E_b/N_0$ (Average / Worst-case)}
\label{tab:latency_comparison_op}
\resizebox{\columnwidth}{!}{%
\begin{tabular}{@{}lccccc@{}}
\toprule
\multirow{2}{*}{\textbf{Decoder}} & \multirow{2}{*}{\textbf{Latency Formula}} & \multicolumn{3}{c}{\textbf{Latency [Avg / Worst] (Cycles)}} \\ \cmidrule(l){3-5} 
 &  & \textbf{$N=64$} & \textbf{$N=128$} & \textbf{$N=512$} \\ \midrule
\textbf{SCL ($L=32$)} & $2N-2$ \cite{scl_2N_2} & 126 / 126 & 254 / 254 & 1022 / 1022 \\ \midrule
MSA (BP) & $I \times T_{\text{iter}}$ & 4.7 / 100 & 2.8 / 100 & 2.6 / 100 \\
SCED-27 & $I \times T_{\text{iter}}$ & 4.5 / 100 & 3.0 / 100 & 2.7 / 100 \\
SCED-81 & $I \times T_{\text{iter}}$ & 4.4 / 100 & 3.0 / 100 & 2.8 / 100 \\ \midrule
\textbf{HSCED-27 (Proposed)} & $I \times T_{\text{iter}}$ & 5.9 / 100 & 3.5 / 100 & 2.9 / 100 \\
\textbf{HSCED-81 (Proposed)} & $I \times T_{\text{iter}}$ & 5.6 / 100 & 4.0 / 100 & 3.2 / 100 \\ \bottomrule
\end{tabular}%
}
\vspace{1mm}
\footnotesize{\\Latency is measured in clock cycles assuming $T_{\text{iter}} = 2$ and $I_{\max} = 50$. Average latency is taken at the highest simulated $E_b/N_0$ ($6.0$, $6.5$, and $7.5$ dB for $N=64, 128, 512$, respectively).}
\vspace{-2mm}
\end{table}

\section{Conclusion}

This paper proposed HSCED, a scalable ensemble framework that extends the SCED principle while systematically scaling the number of subcodes under a guaranteed linear covering property. By constructing the ensemble on top of an RREF-based sparse base graph, HSCED achieves an asymmetric structural gain: the modest increase in short cycles is more than compensated by the systematic breaking of dominant stopping sets that limit BP decoding. Although HSCED increases aggregate computation as the ensemble grows, it preserves fixed low latency through parallelization, and simulations show clear block-error-rate improvements over standard BP and conventional SCED under the same latency constraint—making HSCED a strong candidate for URLLC regimes where near-SCL reliability is desired.

\bibliographystyle{IEEEtran}
\bibliography{letter}

\newpage

\vfill

\end{document}